\DeclarePairedDelimiter\ceil{\lceil}{\rceil}
\DeclarePairedDelimiter\floor{\lfloor}{\rfloor}
\begin{document}

\title{Consensus with Preserved Privacy against Neighbor Collusion\thanks{This work was supported by Knut and Alice Wallenberg Foundation.}}




\author{Silun Zhang         \and 
		Thomas Ohlson Timoudas					\and
        Munther A. Dahleh      
}


\institute{S. Zhang \at
           Laboratory for Information and Decision Systems (LIDS), MIT, MA 02139, USA
 \\
              \email{silunz@mit.edu}           
           \and
           T. Ohlson Timoudas\at
           KTH Royal Institute of Technology
           100 44 Stockholm, Sweden\\
           \email{ttohlson@kth.se} 	
           \and
           M. Dahleh \at
              Laboratory for Information and Decision Systems (LIDS), MIT, MA 02139, USA\\
              \email{dahleh@mit.edu}  
}

\date{Received: date / Accepted: date}

\maketitle

\begin{abstract}
This paper proposes a privacy-preserving algorithm to solve the average consensus problem based on Shamir's secret sharing scheme, in which a network of agents reach an agreement on their states without exposing their individual state until an agreement is reached. Unlike other methods, the proposed algorithm renders the network resistant to the collusion of any given number of neighbors (even with all neighbors' colluding). Another virtue of this work is that such a method can protect the network consensus procedure from eavesdropping. 



\keywords{privacy-preserving consensus\and cyber security \and network control   \and secret sharing scheme}
\end{abstract}

\section{Introduction}
\label{intro}

As a successful conceptual abstraction of many emerging phenomena  in nature, the multi-agent model (MAM) has been extensively studied in the last decades. 
These studies initiated from the fundamental but also inspiring problem, namely multi-agent consensus, which aims at driving the states of agents to a global agreement. The formal study of the consensus problem was first introduced by Degroot \cite{degroot1974reaching} in the 1970s, which sparked many further extensions and developments (see, e.g., \cite{olfati2007consensus,ren2005survey,jadbabaie2003coordination} and the references therein).
The methods and results, that were conceived in the study of the consensus problem, were later adapted to network coordination problems in control theory, such as formation control \cite{song2017intrinsic,zhang2020intrinsic,fax2004information,zhang2018intrinsic,zhang2016spherical}, distributed optimization \cite{nedic2014distributed,nedic2009distributed,nedic2010constrained,qu2017harnessing},  and also network games \cite{cenedese2020asynchronous,zhang2019strategy}. 
In the control setting, the situation is further complicated, as the the agents are governed by  nonlinear dynamics \cite{thunberg2014distributed,zhang2020modeling}.

In distributed algorithms, each agent typically requires access to the states of its neighbors in order to compute a local update. 
This may leave the agents in the network vulnerable, as some of them may not wish to disclose this local information to their neighbors, especially if some of it is highly private or sensitive.
For many applications, it is therefore essential to achieve network consensus, while preserving the privacy of each agent.
For example, in opinion dynamics \cite{albi2014boltzmann}, opinions may relate to a sensitive topic, and the participating individuals expect these to remain secret, especially from their acquaintances, i.e., each of their neighbors.




The fundamental idea of many existing approaches to privacy-preserving consensus algorithms, is to conceal the individual state by adding a deterministic or stochastic disturbance to the real state before communicating it.
This idea was first mentioned by Kefayati et al. in \cite{kefayati2007secure}, where a zero-mean normal noise was added to the agents' state. Based on this work, \cite{huang2012differentially} proposed a synchronization algorithm that blends the true state with a random noise drawn from a Laplace distribution with a time-decaying magnitude. 
These methods guarantee that the agents reach consensus, but the value agreed upon may not necessarily be the average of the initial states.
To achieve average consensus, Mo and Murray \cite{mo2016privacy} shifted the real state by a particular linear combination of Gaussian processes, tailored specifically to ensure that at any time the sum of all the noise injected previously  vanishes exactly (also cf. \cite{liu2017secure,he2018privacy,katewa2015protecting}).
The true states can also be masked by using a deterministic state mapping \cite{altafini2019dynamical}, or by simply adding a deterministic disturbance, such as an offset \cite{manitara2013privacy,gupta2017privacy}, and a perturbation function whose integral corresponding to the historical effects equals to zero \cite{rezazadeh2018privacy}.

Another approach to privacy-preserving consensus takes advantage of  \textit{homomorphic encryption} schemes, which allows  algebraic computations to be performed directly on the encrypted data without the need of deciphering it first \cite{gentry2009fully}. Indeed, many prevalent encryption methods are naturally   \textit{partially homomorphic}, meaning that one, but not the other, of the addition and multiplication operations
can be performed directly on the ciphertexts. For example, the \textit{RSA} and \textit{ElGamal} cryptosystems allow multiplications on the ciphered data, and the \textit{Benaloh} and \textit{Paillier} systems allow additions  without deciphering \cite{acar2018survey}. In the classical  consensus algorithm, only addition is involved, which is why the Paillier cryptosystems were used in \cite{alexandru2019encrypted} and \cite{ruan2019secure}  to achieve privacy-preserving consensus.

In cryptography, the \textit{secret sharing scheme} is a multi-party encryption method to share a confidential message with multiple parties, that ensures that even with the collusion of a certain number of parties, it is still not possible to uncover the secret message.  
Shamir \cite{shamir1979share} presented the first algorithm to solve the secret sharing problem, and the good survey paper  \cite{beimel2011secret} provides more insights on the historical evolution of this problem.


In this work, we employ a secret sharing scheme for the communication between agents, to address the problem of privacy-preserving consensus on undirected graphs.
 As in network coordination, each agent is required to send its state to all its neighbors, which indeed injects into the network the duplicates of same piece of information. 
In order to preserve its privacy, each agent can partition its local information into several so-called secret shares. 
 Then, rather than sending the full information to all of its neighbors, each agent only sends one share to each neighbor. As the information can be reconstructed entirely from a certain specified number of shares, the information injected into the network for each agent is still intact but in a confidential way. 
 In addition, the secret sharing scheme makes the communication naturally resistant to eavesdropping.

Unlike methods based on \textit{differential privacy} techniques  \cite{huang2012differentially,mo2016privacy,liu2017secure,he2018privacy,katewa2015protecting}, the proposed method can reach average consensus with no errors, and also  protect the network from eavesdropping. 
In addition, the privacy security adopted in this paper  renders the network immune to the collusion of any given number of neighbors. This is in contrast to \cite{manitara2013privacy,mo2016privacy}, where at least one   neighbor of each agent must be honest. 


In the rest of the paper, we use  $\mathbb Z+$ to denote the set of all positive integers, and for any $N \in \mathbb Z+$, we define the set $[N]=\{1, 2, \dots, N\}$. In addition, we denote by $|S|$ the cardinality of a given set $S$. For any event $A$, the indicator function $\mathbbm{1}(A) =1 $ when $A$ happens, otherwise $\mathbbm{1}(A) =0 $. Moreover, we denote $\mathbf 1 \in \mathbb R^n$ is the vector consisting of all one entries.

\section{Preliminaries}
\label{sec:prelimi}
In this section, we will introduce the notions used in the paper, and revisit some fundamental results on graph theory and secret sharing schemes.

\subsection{Network graph and consensus algorithm with switching topology}
For a networked system, the topology of inter-agent connectivity can be modeled by a graph $\mathcal{G}=(\mathcal{V},\mathcal{E})$, where the set of nodes is $\mathcal{V}=\big\{1,\ldots,N\big\}$, and $\mathcal{E}\subset \mathcal{V} \times \mathcal{V}$ is the edge set. 
A graph $\mathcal G$ is undirected if $(i,j)\in \mathcal E$, for any $(j,i) \in \mathcal E$. 
In this paper, without further indication, we assume that all the graphs are undirected. We also define the neighbor set of a node $i$ as $\mathcal{N}_i= \big\{ j:(j,i) \in \mathcal{E}\big\}$, and we say that $j$ is a neighbor of $i$, if $j \in \mathcal{N}_i$. 
Moreover, we say that two edges are adjacent if they are incident to a same endpoint.

The union of any two graphs $\mathcal G_1$ and $\mathcal G_2$ is defined by $\mathcal G_1 \bigcup \mathcal G_2=\big(\mathcal V_1 \bigcup\mathcal V_2 ,  \mathcal E_1\bigcup \mathcal E_2\big)$. 
Given a finite set of graphs $\bar{\mathcal{G}} = \{\mathcal G_1, \mathcal G_2, \dots, \mathcal G_M\}$ all having the same set of nodes $\mathcal V$, we say that a function  $\mathcal{H}: \mathbb Z+ \to \bar{\mathcal{G}}$ is a dynamical graph. 
Moreover, a dynamical graph $\mathcal{H}(t)$ is \textit{jointly connected} across any time interval $I \subset \mathbb Z+$ if  the graph $\bigcup_{t\in I}  \mathcal{H}(t)$ is connected.

Without loss of generality, we denote by $x_i(t) \in \mathbb R$ the state of node $i$ at time $t\in \mathbb Z+$. We say that the agents in such a network have reached consensus, if for any initial condition $x_i(0)$, $i \in \mathcal V$,  it holds that at some time $t \in \mathbb Z+$, $x_i(t)=x_j(t)$ for all $i,j \in \mathcal V$. 
To tackle this problem, one can cast to the discrete-time consensus algorithm 
\begin{equation}\label{eq:consensus_basic}
x_i(t+1)=x_i(t) + \alpha_i(t) \sum_{j\in \mathcal N_i} (x_j(t)-x_i(t)),
\end{equation}
for $i\in [N]$, where $\alpha_i(t) \in \mathbb R$ is a step size. It has been shown that if a communication graph $\mathcal G$ is connected, and the step size $\alpha_i$ in \eqref{eq:consensus_basic} satisfies $\alpha_i < \frac{1}{ |\mathcal N_i|}$, then the iterative algorithm given in  \eqref{eq:consensus_basic} ensures that  the states of all agents converge to the average value $\bar x_0$, where $\bar x_0=\frac{1}{N} \sum_{i=1}^N x_i(0)$, (see \cite{olfati2007consensus,jadbabaie2003coordination,ren2005consensus} and the references therein).
 
Moreover, when the  communication topology is a dynamical graph $\mathcal H(t)$, $t\in \mathbb Z+$, the following lemma gives a sufficient condition for asymptotic convergence of the consensus algorithm in \eqref{eq:consensus_basic}.

\begin{lemma}[Proposition 2, \cite{moreau2005stability}]\label{lem:time-switching_nonaverage_consensus}
Given a dynamical undirected graph $\mathcal H(t)$, let $\alpha_i(t)=\frac{1}{1+|\mathcal N_i(t)|}$ for each time $t$. If for any time $T$ the graph $\bar{\mathcal{H}}_T = \bigcup_{t=T}^\infty \mathcal H(t)$ is connected, then consensus is globally asymptotically reached using the iterative algorithm in \eqref{eq:consensus_basic}.
\end{lemma}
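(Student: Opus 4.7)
The plan is to rewrite the iterative update as a linear system $x(t+1) = P(t)\,x(t)$ where $P(t)$ is an $N\times N$ row-stochastic matrix. Substituting $\alpha_i(t)=\tfrac{1}{1+|\mathcal N_i(t)|}$ into \eqref{eq:consensus_basic}, one obtains $P_{ii}(t)=P_{ij}(t)=\tfrac{1}{1+|\mathcal N_i(t)|}$ for every $j\in\mathcal N_i(t)$, and $P_{ij}(t)=0$ otherwise. Hence each positive entry of $P(t)$ is at least $\tfrac{1}{N}$, the diagonal is strictly positive, and the sparsity pattern of $P(t)$ is governed exactly by the edges of $\mathcal H(t)$.

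Next I would introduce the usual span seminorm as a Lyapunov function, $V(t)=M(t)-m(t)$ with $M(t)=\max_i x_i(t)$ and $m(t)=\min_i x_i(t)$. Because each coordinate of $x(t+1)$ is a convex combination of entries of $x(t)$, one gets $M(t+1)\le M(t)$, $m(t+1)\ge m(t)$, so $V$ is non-increasing and therefore converges to some $V^*\ge 0$. The whole game is to show $V^*=0$.

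The main obstacle is to convert the joint connectivity assumption into a strict contraction of $V$. The technical ingredient I would invoke is a Dobrushin-type bound: if a row-stochastic matrix $Q$ has every entry bounded below by $\delta>0$, then $V(Qx)\le (1-N\delta)V(x)$; this follows from writing $(Qx)_i-(Qx)_j=\sum_k a_k x_k-\sum_k b_k x_k$ with $a_k,b_k\ge 0$, $\sum_k a_k=\sum_k b_k\le 1-N\delta$, and bounding by $(M-m)\sum_k a_k$. Then I would show that, under joint connectivity starting at any time $T$, one can pick a finite horizon $T'(T)>T$ such that $\bigcup_{t=T}^{T'}\mathcal H(t)$ contains a spanning tree, and combine this with the uniform-diagonal property $P_{ii}(t)\ge\tfrac{1}{N}$ to argue inductively that the transition product $\Phi(T',T)=P(T')\cdots P(T)$ has all entries uniformly positive (say, bounded below by $N^{-(T'-T+1)}$). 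Applying the Dobrushin bound to $\Phi(T',T)$ yields $V(T'+1)\le(1-\gamma)V(T)$ for some $\gamma>0$.

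The subtle point is that the hypothesis only gives joint connectivity of the \emph{infinite} tail $\bar{\mathcal H}_T$, so a priori the intervals $T'(T)-T$ needed above may grow unboundedly, and the one-shot contraction rate $\gamma$ may degrade along the way. To close this gap, I would pass to a $\limsup$/$\liminf$ argument in the style of Moreau: set $\bar x=\lim_t M(t)$, $\underline x=\lim_t m(t)$, suppose for contradiction that $\bar x>\underline x$, and partition the node set into those with $\limsup_t x_i(t)=\bar x$ and the rest. Using that $\bigcup_{t\ge T}\mathcal H(t)$ is connected for every $T$, any boundary edge between the two sets must be active infinitely often in time, and the uniform lower bound $\tfrac{1}{N}$ on the corresponding mixing weights forces nodes on one side to drift towards the other across each such activation—contradicting the separation of the two limit values. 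This final step is the heart of the proof and is exactly the content of \cite[Proposition~2]{moreau2005stability}.
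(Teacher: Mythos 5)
The paper does not prove this lemma at all: it is imported verbatim as Proposition~2 of \cite{moreau2005stability}, so the only ``proof'' on the paper's side is the citation. Your setup is correct and standard: the update with $\alpha_i(t)=\frac{1}{1+|\mathcal N_i(t)|}$ does give a row-stochastic $P(t)$ with positive entries bounded below by $\tfrac1N$ and positive diagonal, the span seminorm $V=M-m$ is non-increasing, and the Dobrushin bound is stated correctly. (One small repair: a single window over which the union graph is connected does not make the product $P(T')\cdots P(T)$ entrywise positive --- it only makes it irreducible with positive diagonal; you need to concatenate on the order of $N-1$ such windows to get a uniformly positive, hence contracting, product.)

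The genuine gap is in the final step, which is the only part that engages the actual hypothesis (connectivity of every infinite tail, with no uniform bound on the activation times). Your partition argument does not close: if $\limsup_t x_i(t)=\bar x$ and $i$ interacts infinitely often with $j$ where $\limsup_t x_j(t)<\bar x$, then $x_i$ indeed drops by a definite amount at each such activation, but nothing prevents it from climbing back to near $\bar x$ in between through interactions inside its own class, so no contradiction with $\limsup_t x_i(t)=\bar x$ is reached. Moreover, your argument as written never uses that the graph is undirected, yet the statement is \emph{false} for directed graphs under the same hypotheses (bounded-below weights plus connectivity of every tail union do not imply consensus in the directed case); any correct proof must use the reciprocity of the interactions essentially. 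The standard ways to finish are Moreau's set-valued (convex-hull) Lyapunov argument for bidirectional links, or the cut-balance theorem of Hendrickx and Tsitsiklis, which yields that each $x_i(t)$ converges and that agents interacting infinitely often with weights bounded below share the same limit; tail connectivity then forces a single limit. As it stands, your proposal defers exactly this step back to the cited proposition, so it is circular as a standalone proof.
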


Note that Lemma~\ref{lem:time-switching_nonaverage_consensus} only guarantees  consensus but not average consensus for the agents' states.

\subsection{Secret sharing schemes}
The secret sharing scheme is 
an encryption method for sharing a confidential message with multiple parties, such that even with the collusion of a certain number of parties, the message should still not be  disclosed. 

Specifically, an $(n,p)$ secret sharing scheme consists of two algorithms (\textbf{Share}, \textbf{Reconstruct}) with the forms that
\begin{itemize}
\item \textbf{Share} takes as input a secret $M$ and outputs $n$ shares $(M_1,\dots, M_n)$;
\item \textbf{Reconstruct} takes as input $p$ different shares $(M_i)_{i\in \mathcal I}$ for any index set $\mathcal I \subset [n]$ with $|\mathcal I|=p$, and outputs $M$.
\end{itemize}

The generated secret share $M_i$ is then  distributed to the party $i$ for each $i\in[n]$. The security of such a scheme requires that any collusion of less than $p$ parties should reveal no information about the message $M$. More precisely, this means that for any index set $\mathcal I\subset [n]$ with $|\mathcal I|<p$, the distribution of $(M_i)_{i\in \mathcal I}$ should be  independent of the true message $M$. Secret sharing schemes are used in many applications, e.g., encryption keys, distributed storage, missile launch codes, and numbered bank accounts. In these applications, each of the generated pieces of information must keep the original message confidential, as their exposure is undesirable, however, it is also critical that the message should not be lost. 

One celebrated secret  sharing scheme is the Shamir's scheme proposed by Adi Shamir in 1979 \cite{shamir1979share}. In this scheme, the Share algorithm samples the values of a secret $(p-1)$-order polynomial at $n$ different points, and  
the Reconstruct algorithm can recover the secret polynomial from any $p$ of these samples. 
In addition, as the $(p-1)$-order polynomial contains the true message $M$ as its constant term, $M$ can be   reconstructed (see Appendix for more details). 

%


\section {Problem formulation}
In this paper, we want to achieve network consensus subject to  communication safety, and also anti-collusion of neighbors.

Define the security degree by an integer tuple 
\[p=(p_1, p_2, \dots, p_N)\in \mathbb Z+^{N},\]
with $(p_i-1)$ indicating the maximum number of  neighbors of agent $i$ that are allowed to collude without a privacy leak.
Next, we give the detailed definition of the security adopted in the paper.

\begin{definition}[$p$-degree security]\label{def:p-security}
We say that an algorithm in a network is of \textit{$p$-degree security} if for each agent $i\in[N]$,  at any time $t\in \mathbb Z+$,
\begin{enumerate}
\item[(1)]  the state $x_i(t)$ is safe even with the collusion of less than $p_i$ neighbors in $\mathcal N_i$, 
and
 
\item[(2)] it is not possible to disclose the state $x_i(t)$ by eavesdropping the communication on less than $p_i$ edges in $\{(i,j)\in \mathcal E: j\in \mathcal N_i\}$.
\end{enumerate}
\end{definition}

In this definition, we set the security degree $p_i$ for each agent $i$ is to assure that even a certain number of $i$'s neighbors betray or get attacked, the state $x_i(k)$ is still not leaked.

\begin{remark}
For an algorithm of $p$-degree security,
(i) if $p_i\geq 2$, the state $x_i(t)$ is kept secret from  every neighbor $j\in \mathcal{N}_i$. 
(ii) If $p_i=|\mathcal N_i|$, then $x_i(t)$ is disclosed only when all neighbors of a node are colluding. (iii) If $p_i > |\mathcal N_i|$, then the state $x_i(t)$ is completely confidential in the network. 
\end{remark}



Now we are ready to state the problem of privacy-preserving consensus that will be solved in this paper.

\begin{problem}\label{prob:1}
In a network consisting of $N$ agents, the problem of \textit{privacy-preserving consensus} with  security degree $p$ is to achieve   for every agent $i$ that
\begin{enumerate}
\item[(a)] average consensus is reached, i.e.,
\[
\lim_{t\to \infty} x_i(t)=\frac{1}{N}\sum_{j\in[N]} x_j(0), \qquad \forall i \in [N], \text{ and }
\]

\item[(b)] the consensus algorithm is of   $p$-degree security. 
\end{enumerate} 
\end{problem}

We note that the privacy-preserving requirement of $(2)$ in Problem~\ref{prob:1} is only valid before  consensus has been reached. After that, although the information transmitted between agents is still encrypted, or more precisely the communication satisfies $p$-degree security, the agent state $x_i(k)$ is nevertheless already known to all the other agents, due to the state consensus.

\section{Privacy-preserving consensus based on secret sharing}

%
%
%

Two algorithms are proposed in this section. The first one solves the privacy-preserving consensus problem by using a secret sharing scheme in  communication. 
The second one is a 
key distribution algorithm which synchronizes a secret key across the network within finite steps.

\subsection{Privacy-preserving consensus algorithm}\label{sec:PPC}
In this subsection, we propose an algorithm inspired by the Shamir's secret sharing scheme to solve Problem~1.

As a private key, each agent $i$  will randomly initialize\footnote{In this paper, any random  variables are drawn from the uniform distribution on their supported sets.}
 a coefficient vector $a^{(i)}=(a_1^{(i)}, \dots, a^{(i)}_{p_i-1}) \in \mathbb [-1, 1]^{p_i-1}$, and  keep this vector 
$a^{(i)}$ secret from all other agents. Then for each $i\in[N]$, we define an encryption  polynomial of order $(p_i-1)$ by
\[
	f_i(\theta,t)=\sum_{j=1}^{p_i-1} a_j^{(i)} \theta^j + x_i(t),
\]
for $\theta \in \mathbb R$, $t\in \mathbb Z+$. This polynomial is  only known to agent $i$, since  the coefficients $a^{(i)}$ and state $x_i(t)$ are both hidden from the others. Note that the state $x_i(t)$ is equal to $f_i(0,t)$.

Let a security degree $p \in \mathbb Z+^N $, and we say that $p$ has  maximal order  $\bar p$,  if it holds that
\[\bar{p}=\max_{i \in [N]} p_i.\]
Note that such an upper bound $\bar p$ of  secret degrees  indicates the maximal capacity of the tolerance for    attacks on the neighbors, and can be preassigned for a privacy-preserving algorithm.

Then we define the \textit{key sequence} for $\bar p$ communication channels by an integer vector 
\[S=(S_1, \dots, S_{\bar p}) \in [1, \kappa]^{\bar p}\subset \mathbb Z^{\bar p},
\]
where $\kappa  \in \mathbb Z+$ is the maximal possible key and in general  $\kappa \gg \bar p$.


In order to ensure that the agents communicate using the same key sequence, they first need to agree on a common key sequence $S$.
To this end, one method is to preassign a random, or a default key sequence for all the agents. For example, we can set a default key $S=(1, 2, 3, \dots, \bar p)$, which is actually  widely used in many secret sharing applications. 
An alternative method to establishing  a common  key sequence $S$   is to use a consensus algorithm, which  can synchronize the key sequences of all the agents starting from any random initial keys.
The latter will be further detailed in Section \ref{sec:key_distribution}.

Next, we present Algorithm~\ref{alg:moment_method} to reach   asymptotic consensus with   $p$-degree security for a group of agents $\mathcal V$. 
To store the information transmitted through each channel, every node $i\in \mathcal V$ sets a local buffer $r_i(k) \in \mathbb R$, for every $k\in S$. 

Algorithm 1 starts with  a Handshake procedure (Step 1 in Algorithm~\ref{alg:moment_method}), in which a communication channel $c_{ij} \in S$ is assigned for each edge $(i,j)$. Particularly, at the beginning of each iteration, the agent $i$, $i\in \mathcal V$, randomly selects a Handshake Pair $(b_j^{(i)},c^{(i)}_j)$ for each  edge $(i,j)\in \mathcal E$, with $b_j^{(i)}\in [-1,1]$ being a random bid value, and $c^{(i)}_j \in S$ 
being the preferred channel for edge $(i,j)$ according to agent $i$.
Then the edge $(i,j)$ will be assigned  
the preferred channel with the larger bid value from the nodes $i$ and $j$.

%
%
%
%

After each edge has been assigned  a channel, a channel-wise consensus iteration (Step 2 in Algorithm~\ref{alg:moment_method}) is performed amongst the agents. In particular, for a given channel $k\in S$, each agent $i$ updates the buffer $r_i(k)$ by using the information only from the neighbors in $\{j \in \mathcal N_i: c_{ij}=k\}$, i.e., only performs a consensus iteration on the edges whose communication channels have been selected to be $k$.

Moreover, the Handshake procedure selects a channel  for each edge randomly and repeatedly at each time $t\in \mathbb Z+$. This guarantees that,
over any long enough time period, the dynamical graph corresponding to each channel will be jointly connected. 
As a result, the buffers on each channel will reach consensus asymptotically, and then so does the reconstructed state $x_i(t)$. 
Proposition~\ref{prop_2} shows that  Algorithm~\ref{alg:moment_method} achieves   average consensus asymptotically for all the agents.

%

\renewcommand{\algorithmicrequire}{\textbf{Input:}}
\renewcommand{\algorithmicensure}{\textbf{Output:}}
\algsetup{indent=12pt}

\begin{algorithm}[h]
\caption{Privacy-preserving consensus with security degree $p$.}
\label{alg:moment_method}
\begin{algorithmic}[1]
\REQUIRE Security degree $p=(p_1, p_2, \dots, p_N)$; Initial $x_i(0)$, $\forall i \in [N]$; Graph $\mathcal G=(\mathcal E, \mathcal V)$; Secret key $S \in [1, \kappa]^{\bar p}$, which has been successfully distributed to all nodes; Step size $\alpha$.

\STATE \textbf{Initialization:} For each node $i$, randomly  choose\footnotemark \,
a coefficient vector $a^{(i)}\in \mathbb [-1,1]^{p_i-1}$, and Handshake Pairs $(b^{(i)}_j, c^{(i)}_j)$, $\forall j\in \mathcal N_i$, where $b^{(i)}_j\in [-1,1]$ and $c^{(i)}_j \in S$.
Initialize the buffers as $r_{i}(k)=f_i(k,0)$, for any $i\in [N]$, and $k\in S$.


\FOR{$t=1,2, \dots$}

\FOR{$i=1$  \TO $N$}
    \vspace{2mm} 
	\STATE \%\textit{Step 1 (Handshake): decide a channel $c_{ij}$ for each 	edge $(i,j)\in \mathcal E$.}
	\FOR{$j\in \mathcal N_i$}
	
		 \IF{$b^{(i)}_j\geq b^{(j)}_i$} 
	 	\STATE $c_{ij}\gets c^{(i)}_j$;
	 	\ELSE \STATE $c_{ij}\gets c^{(j)}_i$;
	 \ENDIF

%


	\ENDFOR
	\vspace{2mm}
	\STATE \%\textit{Step 2 (Chanel-wise consensus): update  $i$'s buffer using the information over channel $c_{ij}$ for each neighbor $j$. 
	Note that $r_j(c_{ij})$ is the only  information needed by $i$.}
	\FOR{$k$ in $S$}
	\STATE 
	\begin{equation}\label{eq:consensus_chanel}
	r_i(k) \!\gets\! r_i(k)\! + \!\alpha\sum_{j \in 		\mathcal N_i} \mathbbm{1}(c_{ij}\!=\!k) 					\big[r_j(k)\!-\!r_i(k)\big];  
	\end{equation} 
	\ENDFOR
	
	 \vspace{2mm}
	\STATE \%\textit{Step 3: Reconstruct the state $x_i(t)$ from buffers.}
	\STATE Update $x_i(t)$: 
	\[x_i(t)\gets \sum_{k\in S} r_i(k) \frac{\prod_{\ell\in S \setminus \{k\}}  (-\ell)}{ \prod_{\ell\in S \setminus \{k\}}  (k-\ell)  };
	\]

\ENDFOR

\vspace{2mm}
	\STATE \%\textit{Reset Handshake pairs.}
	\STATE For each node $i$, randomly  choose Handshake Pairs $(b^{(i)}_j, c^{(i)}_j)$, $\forall j\in \mathcal N_i$.
\vspace{2mm}

\ENDFOR

\end{algorithmic}
\end{algorithm}
\footnotetext{The random selection is drawn from the uniform distribution, in particular  $a^{(i)} \sim U([-1,1]^{p_i-1})$, $b^{(i)}_j \sim U([-1,1])$, and $c_j^{(i)}\sim U(S)$, where $U(\cdot)$ is the uniform distribution.}

\begin{proposition}\label{prop_2}
If the communication graph $\mathcal G$ is connected, and the step size $\alpha$ satisfies
\[\alpha < \frac{1}{\max_{i\in \mathcal V} |\mathcal N_i|},\] then Algorithm 1 solves Problem 1.
\end{proposition}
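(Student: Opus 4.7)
The plan is to prove the two requirements of Problem~1 separately. The key structural observation is that Algorithm~\ref{alg:moment_method} decouples into $\bar p$ independent channel-wise consensus recursions on the buffers $r_i(k)$, coupled only through the Lagrange reconstruction in Step~3.

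For requirement (a), I would first fix a channel $k\in S$ and treat \eqref{eq:consensus_chanel} as a standard constant-step consensus on the undirected switching subgraph $\mathcal G_k(t)=(\mathcal V,\{(i,j)\in\mathcal E:c_{ij}(t)=k\})$. Undirectedness comes from the Handshake symmetry $c_{ij}(t)=c_{ji}(t)$; the hypothesis $\alpha<1/\max_i|\mathcal N_i|$ then makes the iteration matrix symmetric, doubly stochastic, with strictly positive diagonal, so $\sum_i r_i(k)(t)$ is conserved. Since $c^{(i)}_j$ and $c^{(j)}_i$ are i.i.d.\ uniform on $S$, each edge of $\mathcal G$ is assigned to channel $k$ in any given round with probability $1/\bar p>0$, independently across rounds; by Borel-Cantelli, $\bigcup_{t=T}^{\infty}\mathcal G_k(t)\supseteq\mathcal G$ is almost surely connected for every $T$. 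The classical average-consensus theorem for symmetric, doubly stochastic iterations on jointly-connected graphs then yields $r_i(k)(t)\to\bar r(k):=\frac{1}{N}\sum_{j\in\mathcal V}f_j(k,0)$ for every $i$. Since $\deg f_j(\cdot,0)=p_j-1\leq\bar p-1=|S|-1$, Lagrange interpolation at the $\bar p$ distinct nodes of $S$ reproduces the constant term exactly, so Step~3 gives $\lim_{t\to\infty}x_i(t)=\sum_{k\in S}\bar r(k)\prod_{\ell\in S\setminus\{k\}}\frac{-\ell}{k-\ell}=\frac{1}{N}\sum_{j\in\mathcal V}x_j(0)$, which is (a).

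For requirement (b), I would note that the only scalar transmitted along edge $(i,j)$ at time $t$ is $r_i(c_{ij}(t))(t)$, which is an affine function of the initial shares $\{f_\ell(k,0)\}$ with deterministic, publicly computable coefficients. At $t=0$ these transmissions are direct evaluations of $f_i(\cdot,0)$, a random polynomial of degree $p_i-1$ with constant term $x_i(0)$ and secret coefficients $a^{(i)}$ drawn uniformly from $[-1,1]^{p_i-1}$; any collusion of fewer than $p_i$ neighbors of $i$, or any eavesdropper on fewer than $p_i$ incident edges, therefore sees at most $p_i-1$ evaluations, leaving $x_i(0)$ underdetermined exactly as in Shamir's scheme. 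The hard part will be extending this guarantee to $t\geq 1$: the buffers mix across iterations, so I need to argue that the linear map sending the private pair $(a^{(i)},x_i(t))$ to any collection of fewer than $p_i$ observations touching agent $i$ still has rank at most $p_i-1$ in the $x_i(t)$-component. I expect this to follow by tracking the dependence of each $r_i(k)(t)$ on $a^{(i)}$ and using the distinctness of the $\bar p$ channel labels in $S$, together with the independence of $a^{(i)}$ from all public randomness in the Handshake protocol.
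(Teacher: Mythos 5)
Your treatment of requirement (a) is essentially the paper's argument. The paper likewise decomposes the dynamics channel-by-channel, writes $\xi_k(t+1)=(I-\alpha L_k(t))\xi_k(t)$, observes that $\mathbf 1^T(I-\alpha L_k(t))=\mathbf 1^T$ conserves the per-channel average, and then runs a Lyapunov argument on $\|\xi_k(t)-\mathbf 1\bar\xi_k(0)\|_2^2$, which strictly decreases at the (almost surely infinitely many) times when the channel-$k$ subgraph is connected, with Gershgorin's theorem supplying the step-size condition; the Lagrange-interpolation step at the end is identical to yours. Your Borel--Cantelli justification of joint connectivity of $\bigcup_{t\geq T}\mathcal G_k(t)$ is, if anything, a cleaner version of the paper's one-sentence assertion that connected instants recur. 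One point to tighten: the ``classical theorem'' you invoke must be the \emph{average}-consensus version for doubly stochastic, jointly-connected switching systems --- Lemma~\ref{lem:time-switching_nonaverage_consensus} as stated gives only consensus, not average consensus --- so you still need the conservation-plus-contraction argument (which the paper carries out explicitly) or a precise citation.

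The genuine gap is the one you flag yourself: requirement (b) for $t\geq 1$. At $t=0$ the transmissions are literal Shamir shares of $x_i(0)$, but for $t\geq 1$ a fixed neighbor $j$ observes $r_i(c_{ij}(t))$ at every round, and since $c_{ij}(t)$ is resampled each round, $j$ alone eventually collects values of $i$'s buffers on many different channels (at different times, after mixing). Whether these observations, combined with $j$'s knowledge of the update law and of the channel assignments on its own incident edges, determine $x_i(t)$ is exactly what Definition~\ref{def:p-security} asks, and your proposal defers this to an unproven rank/independence claim. To be fair, the paper's own proof of part (b) is equally incomplete --- it consists of two sentences asserting that fewer than $p_i$ colluding neighbors cannot reconstruct $x_i(t)$ and that an eavesdropper without the key learns nothing --- so you have not missed an argument the paper actually supplies; but as written, neither your proposal nor the paper establishes $p$-degree security for $t\geq 1$, and completing the proof would require carrying out the analysis you only sketch in your final sentence.
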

\begin{proof}
  For any time $t \in \mathbb Z+$, through the Handshake procedure, we denote by $c_{ij}(t) \in S$ the assigned communication channel on the edge $(i,j)$. 

  Then given a key  sequence $S\in \mathbb Z^{\bar p}$, for each channel $k \in S$, we define a channel vector 
  \[\xi_k(t)=\big(\,r_1(k), r_2(k),  \dots, r_N(k)\,\big) \bigg|_{t} \in \mathbb R^{N},\]
  where $r_i(k)\big|_t$ is agent $i$'s buffer corresponding to the channel $k$ at time $t$. Then this channel vector satisfies 
\begin{equation}\label{eq:proof_consensus_chanel_k}
\xi_k(t+1) = \xi_k(t) - \alpha L_k(t) \xi_k(t),
\end{equation}
where the matrix $L_k(t)$ is the Laplacian matrix of channel $k$  at time $t$. In particular, the matrix $L_k(t)=[l_{k,t}(i,j)]_{i,j} \in \mathbb R^{N \times N}$ satisfies
\[
l_{k,t}(i,j)\!=\!
\begin{cases}
-1 & \text{ if } i\!\neq\! j, \text{ and } c_{ij}(t)=k,\\
\sum_{\ell \in \mathcal{N}_i}\!\! \mathbbm{1}(c_{i\ell}(t)=k) & \text{ if } i\!=\! j,\\
0& \text{ otherwise}.\\ 
\end{cases}
\]

Denote the set of all the subgraphs of $\mathcal G$ by $\bar{\mathcal{G}}$. Then, for each channel $k$, define the dynamical graph  $\mathcal H_k : \mathbb Z+ \to \bar{\mathcal{G}}$, such that the Laplacian matrix of $\mathcal H_k (t)$ is  $L_k(t)$.
Then for all $k\in S$, dynamical graph $\mathcal H_k (t)$ satisfies that for any time $t_0$, there exists an infinite subset of time instants, denoted by $\bar T_{t_0} \subset \{t\in \mathbb Z+: t\geq t_0\}$, such that $\mathcal H_k(t)$ is connected  for any $t\in \bar T_{t_0} $. The   existence of $\bar{T}_{t_0}$ is because, in the Handshake procedure, the channels $c_{ij}(t)$ are randomly selected at each time $t$.

Then, for each channel $k$, we define the average buffer vector $\bar \xi_k(t)=\frac{1}{N} \mathbf 1^T \xi_k(t)$, which can be shown to be invariant along the trajectory  of system \eqref{eq:proof_consensus_chanel_k}. This is due to the fact that $\mathbf 1^T (I-\alpha L_k(t)) = \mathbf 1^T$, for any $t\in \mathbb Z+$. This implies 
\begin{equation}\label{eq:proof_constant_average}
\bar{\xi}_k(t)=\frac{1}{N} \mathbf 1^T \xi_k(0),\quad \forall t. 
\end{equation}
Then the consensus error is given by $\delta_k(t)=\xi_k(t)- \mathbf 1 \bar{\xi}_k(0)$. Note  that $\delta_k(t)^T \mathbf 1=0$ for any $t$. 
Furthermore, for any $t_p $ such that $L_k(t_p)$ corresponds to a connected graph, we know  that the matrix $ L_k(t_p)$ has a simple smallest eigenvalue $0$ with the eigenvector $\mathbf 1$.  This implies that
\begin{align}
&\max_{x\in \{x^T\mathbf 1=0\} }\frac{x^T (I-\alpha L_k(t_p)) x}{x^Tx} \\
=&1-\alpha\min_{x\in \{x^T\mathbf 1=0\} }\frac{x^T  L_k(t_p) x}{x^Tx}\nonumber \\
= &1-\alpha \lambda_2^k(t_p),\label{eq:proof_max_lamda2}
\end{align}
where $\lambda_2^k(t_p)$ is the second smallest eigenvalue of the Laplacian $L_k(t_p)$.

Then we consider a discrete Lyapunov function $\Phi_k(t)= \delta_k(t)^T\delta_k(t)$, for each $k\in S$. Let $t_0=1$. Then for any time instant $t_p \in \bar{T}_{t_0}$, we have 
\begin{align*}
\Phi_k(t_{p}+1) &=\|\big[I-\alpha L_k(t_{p})\big]\xi_k(t_p) - \mathbf 1 \bar{\xi}_k(0)\|^2_2\\
 &=\|\big[I-\alpha L_k(t_{p}) \big] \delta_k(t_p)\|^2_2\\
&\leq (1-\alpha \lambda_2(t_p) )^2 \|\delta_k(t_p)\|^2_2,
\end{align*} 
where we use $L_k(t_p)\mathbf 1=0$ in the second equality, and equation \eqref{eq:proof_max_lamda2} in the last inequality. By Gershgorin theorem, all eigenvalues of $L_k(t_p)$ are located  in the interval $[0, 2 \max_{i}|d_i^k(t_p) |]$, where $d_i^k(t_p)$ is the $i$-th diagonal element of Laplacian $L_k(t_p)$. Thus if $\alpha <\frac{1}{\max_{i\in \mathcal V} |\mathcal N_i|} < \frac{1}{\max_{i\in \mathcal V} |d_i^k(t_p)|}$, then $| 1 - \alpha \lambda_2(t_p)| <1$. Therefore, the Lyapunov function $\Phi_k(t)$ is strictly decreasing at any time $t\in \bar T_{t_0}$. Combine the argument \eqref{eq:proof_constant_average},  average consensus follows for all channel $k\in S$.

Then we have for any  $k\in S$ and $i\in \mathcal V$ ,
\begin{align*}
\lim_{t\to \infty} r_i(k)\big|_t &= \frac{1}{N}\sum_{i\in \mathcal V} r_i(k)\big|_{t=0}= \frac{1}{N}\sum_{i\in \mathcal V} f_i(k,0)\\
&= \frac{1}{N}\sum_{i\in \mathcal V}\sum_{j=1}^{p_i-1} \big(  a_j^{(i)}\big) k^j + \frac{1}{N}\sum_{i\in \mathcal V} x_i(0)\\
&=\sum_{j=1}^{\bar p-1} \Bigg(\frac{1}{N}\sum_{i\in \mathcal V}   a_j^{(i)}\Bigg) k^j + \frac{1}{N}\sum_{i\in \mathcal V} x_i(0),
\end{align*}
in which we use $\bar p\geq \max_{i\in [N]} p_i$, and  we set $a^{(i)}_j=0$ for any $j \geq p_i$.
Then we see that after   consensus is reached, i.e., when $t$ is large enough,  every node obtains $\bar p$ observations $\{(k,F(k)): k\in S\}$ of the average polynomial $F(\theta)$, where $F(\theta)$ is 
\[
F(\theta) =\sum_{j=1}^{\bar p -1} \Bigg(\frac{1}{N}\sum_{i\in \mathcal V}   a_j^{(i)}\Bigg) \theta^j + \frac{1}{N}\sum_{i\in \mathcal V} x_i(0).
\]
According to the Shamir's algorithm, the constant term $\frac{1}{N}\sum_{i\in \mathcal V} x_i(0)$ can be reconstructed by each agent $i$ as
\[
\frac{1}{N}\sum_{i\in \mathcal V} x_i(0)=\sum_{k\in S} r_i(k)\big|_{t=\infty} \frac{\prod_{\ell\in S \setminus \{k\}}  (-\ell)}{ \prod_{\ell\in S \setminus \{k\}}  (k-\ell)  }=x_i(\infty).
\]
Thus the assertion holds. 

The $p$-degree security of Algorithm 1 follows that for any agent $i\in \mathcal V$.
Moreover, 
any collusion of less than its $p_i$ neighbors cannot reconstruct $x_i(t)$. 

For any adversary, since the key sequence $S$ is unknown, the interception on the communication link does not give any useful information to reconstruct $x_i(t)$.
$\blacksquare$
\end{proof}

Note that in Algorithm~\ref{alg:moment_method}, 
we can use an alternative update law for each $r_i(k)$ in \eqref{eq:consensus_chanel}, which is 
\[r_i(k) \gets \alpha_i r_i(k) + \alpha_i\sum_{j \in \mathcal N_i} \mathbbm{1}(c_{ij}=k) r_j(k),\]
where $\alpha_i = \frac{1}{1+\sum_{j\in \mathcal N_i} \mathbbm{1}(c_{ij}=k)}$. But by Lemma~\ref{lem:time-switching_nonaverage_consensus}, this update law can only achieve consensus, but the reached state does not necessarily equal to $\frac{1}{N}\sum_{i\in \mathcal V} x_i(0)$.

\subsection{Key distribution}\label{sec:key_distribution}
To protect the privacy-preserving algorithm from eavesdropping, a the  secret key sequence $S$ must first be distributed to each individual in the network. By using such a key sequence, the information exchanged between the agents afterwards becomes resistant to the wiretapping on communication links.

Indeed, the key sequence can be set to some default value, e.g., $S=(1, 2, \dots, \bar p)$. Even with such  kinds of default keys being public, any eavesdropper would still need at least $p_i$ pieces of information transmitted to/from agent $i$ to reconstruct the state $x_i(t)$. However, we still provide here a key distribution algorithm for the scenario in which the key $S$ must be kept secret.

We note that  consensus algorithms typically only provide  asymptotic convergence to a synchronized state. This means that, within finite time, one agent's key updated by the algorithm can only approach, but not equal, the keys of the others.
To reach an exact synchronization for the key sequence,  finite-time consensus algorithms are needed, such as, e.g., \cite{kibangou2014step,sundaram2007finite,wang2010finite,wei2018finite}. 
These finite-time consensus algorithms, however, often depend on particular communication graphs \cite{kingston2006discrete}, or require first computing a matrix factorization  \cite{kibangou2014step} or minimal polynomial \cite{sundaram2007finite} of the weight matrix. For our purpose, with the elements of the key sequence being integers, we can adopt the following finite-step consensus algorithm that can be computed much more easily.

Define the communication weight matrix $W=[w_{ij}] \in \mathbb R^{N\times N}$, satisfying the following conditions:
\begin{enumerate}
\item[(a)] $w_{ij}>0$ if $(i,j)\in \mathcal E$, and $w_{ij}=0$ if $(i,j) \notin \mathcal E$.  Moreover, $w_{ii}>0$ for $i\in \mathcal N$.
\item[(b)] The matrix $W$ is doubly stochastic, i.e., $\sum_{k} w_{ki}=1$, and $\sum_{k} w_{ik}=1$ for any $i\in \mathcal N$.
\end{enumerate}

Note that for any connected graph $\mathcal G$, the nonnegative matrix $W$ is irreducible, and moreover, $W$ is primitive, i.e., $W$ has only one eigenvalue with maximum modulus. In particular, we have the largest eigenvalue of $W$, which is $1$, is simple  with eigenvector $\mathbf 1$.

Denote the spectral norm \footnote{Spectral norm of a matrix $A\in \mathbb C^{n\times n}$ is defined by \[\max_{\|x\|_2 \neq 0} \frac{\| A x\|_2}{\|x\|_2},\] which is equal to the square root of the maximum eigenvalue of  $A^*A$.}
 of $W-\frac{1}{n}\mathbf 1 \mathbf 1^T$ by $\gamma$. Then we have $\gamma \in (0, 1)$. This is because the positive semidefinite matrix $W^TW$ has a unique largest eigenvalue $1$ with eigenvector $\mathbf 1$. Consequently,, all the eigenvalues of $(W- \frac{1}{n}\mathbf 1 \mathbf 1^T)^T (W- \frac{1}{n}\mathbf 1 \mathbf 1^T) = W^TW- \frac{1}{n} \mathbf 1 \mathbf 1^T$ belong to $(0,1)$.
 
We employ the classical consensus algorithm 
\begin{equation}\label{eq_clasical_consensus}
S_i(t+1) = \sum_{j=1}^N w_{ij} S_{j}(t),
\end{equation}
for $S_i(t) \in \mathbb R^{\bar p}$  and $i\in[N]$. Then the following lemma gives the convergence rate for the consensus dynamics \eqref{eq_clasical_consensus}.

\begin{lemma}\label{lem:convergence_rate}
Let $\kappa \in \mathbb R+$, and let a threshold $\delta \in \mathbb R+$ be given. Then, for any initial condition  $S_i(0) \in [0, \kappa]^{\bar p}$, $i\in [N]$, the trajectory of the consensus algorithm \eqref{eq_clasical_consensus} satisfies 
\[
\|S_i(T) - \frac{1}{N} \sum_{j=1}^N S_j(0)\|_\infty \leq \delta, \qquad \forall i\in [N],
\] 
where $T = \ceil*{\log_{\lambda} \frac{\delta}{\kappa \sqrt{ N} }}$.
\end{lemma}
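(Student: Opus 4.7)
The plan is to analyze the dynamics coordinate by coordinate. For each $\ell \in [\bar p]$, I would stack the $\ell$-th components of the agents' keys into $s^\ell(t) = (S_1^\ell(t),\dots,S_N^\ell(t))^T \in \mathbb{R}^N$, so that \eqref{eq_clasical_consensus} becomes the linear recursion $s^\ell(t+1) = W s^\ell(t)$. Since $W$ is doubly stochastic, $\mathbf{1}^T W = \mathbf{1}^T$, which implies that the empirical mean $\bar s^\ell := \frac{1}{N}\mathbf{1}^T s^\ell(0)$ is preserved along the trajectory, i.e., $\frac{1}{N}\mathbf{1}^T s^\ell(t) = \bar s^\ell$ for every $t \in \mathbb{Z}+$.

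Next, I would introduce the disagreement vector $e^\ell(t) := s^\ell(t) - \bar s^\ell \mathbf{1}$. Rewriting $\bar s^\ell \mathbf{1} = \frac{1}{N}\mathbf{1}\mathbf{1}^T s^\ell(t)$ using the invariance of the mean, a direct one-step computation gives
\[
e^\ell(t+1) = \Bigl(W - \tfrac{1}{N}\mathbf{1}\mathbf{1}^T\Bigr) s^\ell(t) = \Bigl(W - \tfrac{1}{N}\mathbf{1}\mathbf{1}^T\Bigr) e^\ell(t),
\]
where the second equality uses $(W - \frac{1}{N}\mathbf{1}\mathbf{1}^T)\mathbf{1} = \mathbf{1} - \mathbf{1} = 0$. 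Iterating yields $e^\ell(t) = \bigl(W - \frac{1}{N}\mathbf{1}\mathbf{1}^T\bigr)^t e^\ell(0)$, and submultiplicativity of the spectral norm, together with the previously established fact that $\bigl\|W - \frac{1}{N}\mathbf{1}\mathbf{1}^T\bigr\|_2 = \gamma \in (0,1)$, delivers the geometric contraction $\|e^\ell(t)\|_2 \leq \gamma^t \|e^\ell(0)\|_2$.

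Finally, I would bound the initial disagreement in the right norm. Because every entry $S_i^\ell(0)$ and the mean $\bar s^\ell$ both lie in $[0,\kappa]$, each coordinate of $e^\ell(0)$ has magnitude at most $\kappa$, giving $\|e^\ell(0)\|_2 \leq \sqrt{N}\,\kappa$. Choosing $T = \lceil \log_{\gamma} \frac{\delta}{\sqrt{N}\,\kappa} \rceil$ (which is the $\gamma$ in the statement; the displayed $\lambda$ appears to be a typo for the spectral-norm symbol introduced just before the lemma) then enforces $\gamma^T \sqrt{N}\,\kappa \leq \delta$, so that $|S_i^\ell(T) - \bar s^\ell| \leq \|e^\ell(T)\|_\infty \leq \|e^\ell(T)\|_2 \leq \delta$ uniformly in $i$ and $\ell$, which is exactly the claimed $\ell_\infty$ bound. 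The argument is essentially routine once $\gamma < 1$ is available; the only mildly delicate point is the $\ell_2$-to-$\ell_\infty$ comparison on the initial data, which is precisely what produces the $\sqrt{N}$ inside the logarithm.
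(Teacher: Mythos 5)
Your proposal is correct and follows essentially the same route as the paper's own proof in Appendix~B: coordinate-wise reduction to $s^\ell(t+1)=Ws^\ell(t)$, invariance of the mean via $\mathbf 1^T W=\mathbf 1^T$, geometric contraction of the disagreement vector under the spectral norm of $W-\frac{1}{N}\mathbf 1\mathbf 1^T$, and the $\ell_2$-to-$\ell_\infty$ comparison yielding the $\sqrt N$ factor. Your observation that the $\lambda$ in the statement should be the spectral norm $\gamma$ introduced just before the lemma is also accurate; the paper uses the two symbols interchangeably.
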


\begin{proof}
See Appendix B.
\end{proof}

Next, we  give the algorithm to distribute an integer key sequence to all the agents. Note that in the algorithm we denote by $S_{i\ell}$ the $\ell$-th element of   agent $i$'s key sequence $S_i$.
\begin{algorithm}[h]
\caption{Finite-step key sequence distribution.}
\label{alg:key_distribution}
\begin{algorithmic}[1]
\REQUIRE $\bar p$: the number of channels; $\bar N$: the upper bound for the number of nodes; $\lambda$: the spectral norm  of $W-\frac{1}{n}\mathbf 1 \mathbf 1^T$.

\STATE \textbf{Initialization:} 
For each node $i$, randomly
 choose a real vector $S_i(0) \sim U( [1 , \kappa]^{\bar p})$, where $\kappa$ is a default constant such that $\bar p \ll \kappa$. $T= 0.$

\STATE flag $\gets \TRUE$  \quad \%  \textit{Indicator that the key distribution is incomplete.}

\WHILE{flag  $= \TRUE$} 

\STATE \%\textit{Consensus for  $\ceil*{\log_{\lambda} \frac{0.1}{\kappa \sqrt{\bar N} } }$ steps.}
\FOR{$t=T+1, \dots, T+\ceil*{\log_{\lambda} \frac{0.1}{\kappa \sqrt{\bar N} } } $}

\FOR{$i=1$  \TO $N$}
\STATE $S_i(t+1) = \sum_{j=1}^N w_{ij} S_{j}(t).$
\ENDFOR
\ENDFOR
\STATE flag $\gets \FALSE$ 
\vspace{2mm}
\STATE \%  \textit{Check if the key distribution is done.}
\FOR{$i=1$  \TO $N$}
  \FOR{$\ell = 1$ \TO $\bar p$}
  \STATE $\Delta_1 \gets S_{i\ell}(t) - \floor*{S_{i\ell}(t)}$ \%  \textit{decimal part.}
  \STATE $\Delta_2  \gets  \min_{j\in[1,\ell)} \big|\floor*{S_{i\ell}(t)} - \floor*{S_{ij}(t)}\big| $ \%  \textit{minimal discrepancy of element $S_{i\ell}$ away from the elements before it.}
  \IF{ $\Delta_1 \leq 0.1$ \OR $\Delta_1 \geq 0.9$ \OR $\Delta_2 <0.5 $ }
	\STATE $S_{i\ell}(t) \gets y$,  for a random integer $y \in [1, \kappa]$.  
	\STATE flag $\gets \TRUE$    
  \ENDIF
  
  \STATE $S_{i\ell}(t) = \floor*{S_{i\ell}(t)}.$
  
  \ENDFOR
\ENDFOR

\ENDWHILE

\end{algorithmic}
\end{algorithm}

By Lemma~\ref{lem:convergence_rate},  Algorithm~\ref{alg:key_distribution} first guarantees that after a number  $\ceil*{\log_{\lambda} \frac{0.1}{\kappa \sqrt{\bar N} } }$ of consensus iterations, the consensus error  from the average is smaller than the threshold $\delta=0.1$. Then if each element in agent $i$'s state keeps some distance away from the closest integer, and if there are no two identical elements in agent $i$'s key sequence $S_i$, we take
the integer part of the key sequence $S_i$ as $i$'s key sequence.

Although, using  Algorithm~\ref{alg:key_distribution}, we do not need  to compute a matrix factorization, or the minimal polynomial  of the weight matrix \cite{kibangou2014step,sundaram2007finite}, the algorithm requires some global information, i.e., the spectral norm $\gamma$ of the graph. To avoid this, we can employ the  Metropolis matrix as the weight matrix, which is defined by
\[
w_{ij}=
\begin{cases}
\frac{1}{2\max(d_i, d_j)}, \quad &\text{if } (i, j) \in \mathcal E \text{ for } i\neq j, \\
1- \sum_{\ell \in \mathcal N_i} \frac{1}{2\max(d_i,d_\ell)}, \quad &\text{if } i=j,\\
0, \quad &\text{otherwise},
\end{cases}
\]
where $d_i$ is the degree of node $i$, i.e., $d_i=|\mathcal N_i|$. By using this weight matrix, each node is only required to know the set of its neighbors. Moreover, in this case, the spectral  norm of the matrix $W-\frac{1}{n}\mathbf 1 \mathbf 1^T$ satisfies $\gamma \leq 1-\frac{1}{71 N^2}$ \cite[Lemma 2.2]{olshevsky2014linear}.

Actually, when the number of nodes is large, the convergence rate of the consensus algorithm, or the computation complexity for finite-step consensus algorithms becomes enormous. Therefore, one solution to this is using a default (or even public)  key. By the results in section \ref{sec:PPC},  we  see that even if an adversary is aware of the secret key, the privacy of agent $i$  is not disclosed if the number of $i$'s communication channels eavesdropped by the adversary is less than $p_i$.

\section{Simulation}
In this section, we will illustrate the  algorithm using a numerical example, in which a network consisting of $5$ nodes with a connection topology of cyclic graph achieves average consensus using the proposed privacy-preserving method. 
As reported in \cite{olfati2007consensus}, if the step size in the synchronization algorithm is not carefully chosen the system under a cyclic  connection graph will only reach  consensus but not  average consensus.

Specifically, we consider a cyclic graph $\mathcal G_c(5)$ consisting of $5$ nodes as shown in Figure~\ref{Fig:M1}. Let the upper bound of the security degree $\bar p = 4$, and set the security degree $p=(2, 3, 4, 2, 3 )$.
\begin{figure}[h]
    \centering
    \includegraphics[width=0.3\textwidth]{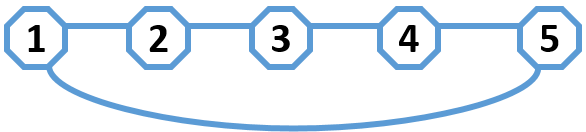}
    \caption{A cyclic graph $\mathcal G_c(5)$ with $5$ nodes.}\label{Fig:M1}
\end{figure}

Firstly a secret key $S\in [1, 20]^4$ is distributed using Algorithm~2, after $28$ iterations all agent share a common secret key
\[
S=(4,7,15,3).
\] 
Then  Algorithm~1 is used to achieve  average consensus with the common secret key $S$. In Figure~\ref{Fig:M1}, we show the trajectories of the buffer $r_i(4)$ for all five agents. Figure~\ref{Fig:M2} presents the states $x_i(t)$ reconstructed from each channel's buffer. We can see that  consensus is reached, and moreover, that consensus state is equal to the average of the initial states. 

\begin{figure}[h]
    \centering
    \includegraphics[width=0.5\textwidth]{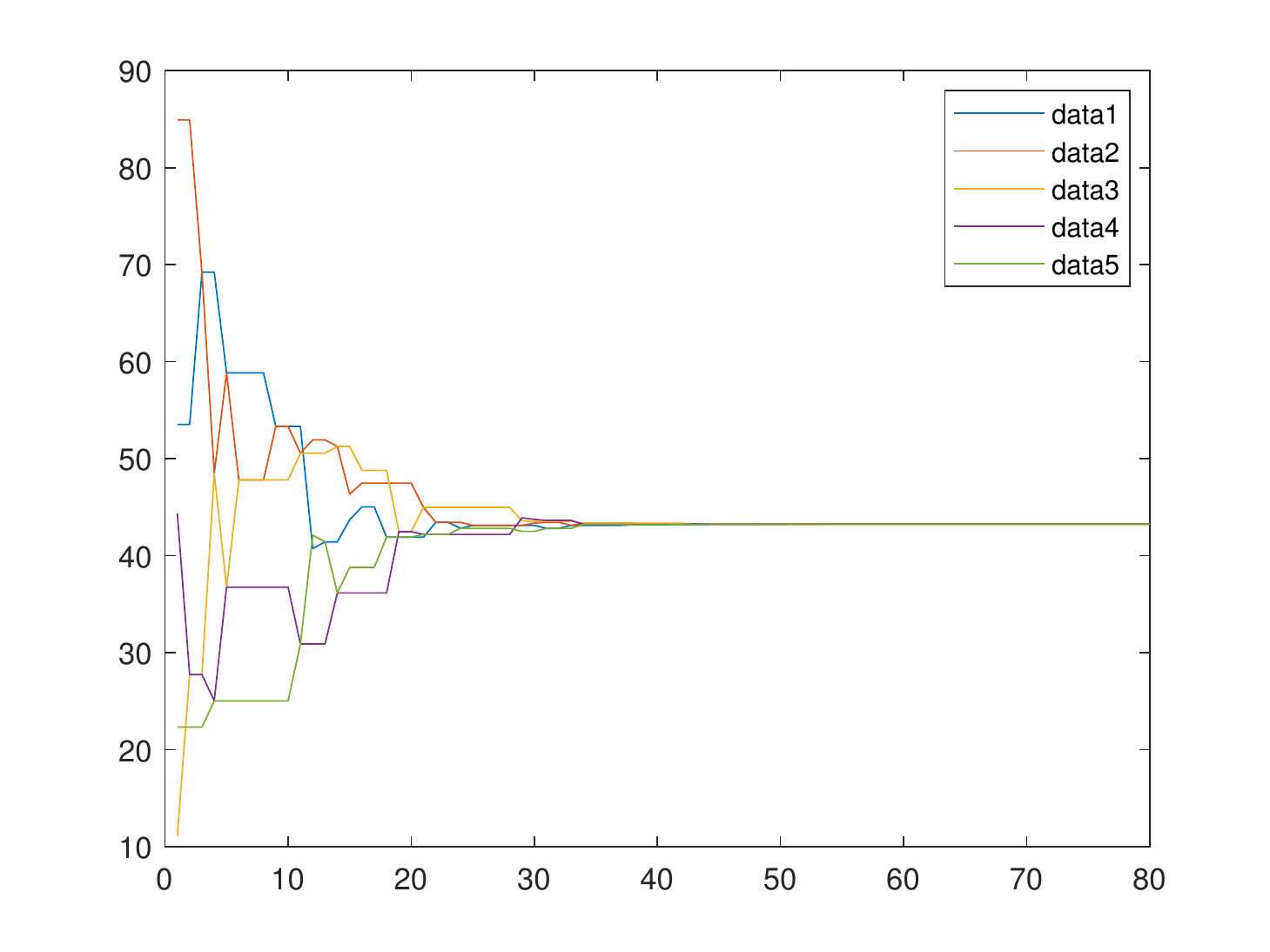}
    \caption{The   buffer trajectories of the five agents, corresponding to the first channel $k = 4$.}\label{Fig:M1}
\end{figure}

\begin{figure}[h]
    \centering
    \includegraphics[width=0.5\textwidth]{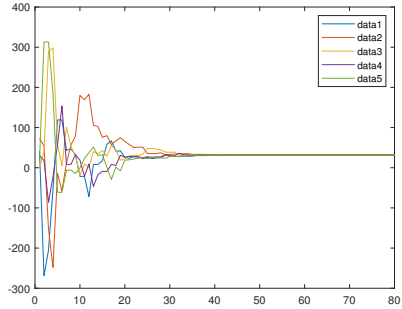}
    \caption{The trajectories of the reconstructed states of all nodes using Algorithm 1. The average of the initial states is $37.731$, and the consensus state is $31.731$.}\label{Fig:M2}
\end{figure}


\section{Conclusion}
This paper proposes a privacy-preserving mechanism for the average-consensus problem based on the secret sharing scheme. The proposed algorithm renders the network resistant to the collusion of any given number of neighbors, and protects the consensus procedure from eavesdropping. In  future work, we will extend this idea to formation control and distributed optimization, and also exploit it in the relevant applications, such as opinion agreement, sensor network averaging, survey mechanism, and distributed decision making.

\section*{Appendix A: Shamir's scheme}
The Shamir's secret sharing scheme \cite{shamir1979share} allows a user to ``share" a secret $M$ among a group of $n$ participants, such that (a) any $p$ or more of the participants can reconstruct $M$, and (b)any set of less than $p$ participants learn nothing about $M$.

For a secret $M \in \mathcal F$, where $\mathcal F$ is a field, the algorithm Share($S$) is 

\begin{enumerate}
\item[(1)] Let $a_0=M$;
\item[(2)] Choose at random $a_1,\dots, a_{p-1}\in \mathcal F$;
\item[(3)] Let $f(x)=\sum_{i=0}^{p-1} a_i x^i$;
\item[(4)] The secret share $M_i=(x_i,y_i)$ for any $i\in [n]$, where $y_i=f(x_i)$.
\end{enumerate}

To reconstruct $M$, we need at least $p$ shares of secret  $\{M_j=(x_j,y_j)\}_{j=1}^p$. Then one can use the Lagrange interpolation formula, 
\[
f(x)=\sum_{j=1}^p y_j L_j(x),
\]
where the Lagrange basis polynomial $L_j(x)$ is defined by
\[
L_j(x)=\frac{\prod_{k\in [p] \setminus \{j\}}  (x-x_k)}{ \prod_{k\in [p] \setminus \{j\}}  (x_j-x_k)  },
\]
for each $j\in [p]$.
Once the polynomial $f(x)$ has been reconstructed, the secret message can be recovered as $M= f(0)$, due to the definition of $f(x)$. Equivalently, the true message $M$ can be reconstructed by
\[
M=f(0)=\sum_{i=1}^M y_i L_i(0)=\sum_{j=1}^M y_j \frac{\prod_{k\in [p] \setminus \{j\}}  (-x_k)}{ \prod_{k\in [p] \setminus \{j\}}  (x_j-x_k)  }.
\]

\section*{Appendix B: Proof of Lemma~\ref{lem:convergence_rate}}
For each $\ell \in [\bar p]$, we denote the $\ell$-component vector  by
\[
S^{(\ell)}(t)= \big(S_{1\ell}(t), S_{2\ell}(t), \dots, S_{N\ell}(t) \big) \in \mathbb R^N, 
\]
where $S_{ij}(t)$ is the $j$-th element of the key $S_i(t)$. Then  for each $\ell$, the consensus iteration reads $S^{(\ell)}(t+1)= WS^{(\ell)}(t)$.

Next, the average for element $\ell$ is $\alpha_\ell(t)= \frac{1}{N}\mathbf 1^T S^{(\ell)}(t)$. Moreover, for any $t$, $\alpha_\ell(t+1)=\alpha_\ell(t)$, due to the fact that $\mathbf 1^T W=\mathbf 1^T$. Thus we can denote the invariant average for component $\ell$ by $\alpha_\ell=\alpha_\ell(0)$. Then we have for each $\ell \in [\bar p]$,
\begin{align*}
\|S^{(\ell)}(t+1) -  \alpha_\ell \mathbf 1 \|_\infty 
&\leq
\|S^{(\ell)}(t+1) -  \alpha_\ell \mathbf 1 \|_2\\
&=\|WS^{(\ell)}(t) -  \alpha_\ell \mathbf 1 \|_2 \\
&= \|(W-\frac{1}{n}\mathbf 1 \mathbf 1^T)(S^{(\ell)}(t) -  \alpha_\ell \mathbf 1) \|_2 \\
& \leq \lambda \|S^{(\ell)}(t) -  \alpha_\ell \mathbf 1 \|_2
\end{align*}
Then for any time $t$, $\|S^{(\ell)}(t) -  \alpha_\ell \mathbf 1 \|_\infty \leq \lambda^t \|S^{(\ell)}(0) -  \alpha_\ell \mathbf 1 \|_2 \leq \lambda^t \sqrt {N}\|S^{(\ell)}(0) -  \alpha_\ell \mathbf 1 \|_\infty$.  According to the initial condition, $\|S^{(\ell)}(0) -  \alpha_\ell \mathbf 1 \|_\infty \leq \kappa$. Then the assertion follows. $\blacksquare$

\bibliographystyle{ieeetr}
\bibliography{ref}

\vspace{10mm}
\textbf{Silun Zhang}  received his B.Eng. and M.Sc.
degrees in Automation from Harbin Institute of
Technology, China, in 2011 and 2013 respectively,
and the PhD degree in Optimization and Systems
Theory from Department of Mathematics, KTH
Royal Institute of Technology, Sweden, in 2019.

He is currently a Wallenberg postdoctoral fellow with the
Laboratory for Information and Decision Systems
(LIDS), MIT, USA. His main research interests
include nonlinear control, networked systems, rigidbody attitude control, and modeling large-scale systems.

\vspace{5mm}
\textbf{Thomas Ohlson Timoudas} received his PhD degree in Mathematics in October 2018 from KTH Royal Institute of Technology, Sweden, and his MSc (2013) and BSc (2012) degrees in Mathematics from Stockholm university, Sweden.

He is currently a postdoctoral researcher with the Department of Network and Systems Engineering at KTH Royal Institute of Technology, Sweden. His main research interests include dynamical systems, networked systems, internet of things, and distributed algorithms.

\vspace{5mm}
\textbf{Munther A. Dahleh} received his Ph.D. degree from Rice University, Houston, TX, in 1987 in Electrical and Computer Engineering. Since then, he has been with the Department of Electrical Engineering and Computer Science (EECS), MIT, Cambridge, MA, where he is now the William A. Coolidge Professor of EECS. He is also a faculty affiliate of the Sloan School of Management. He is the founding director of the newly formed MIT Institute for Data, Systems, and Society (IDSS). Previously, he held the positions of Associate Department Head of EECS, Acting Director of the Engineering Systems Division, and Acting Director of the Laboratory for Information and Decision Systems. He was a visiting Professor at the Department of Electrical Engineering, California Institute of Technology, Pasadena, CA, for the Spring of 1993. He has consulted for various national research laboratories and companies.

Dr. Dahleh is interested in Networked Systems with applications to Social and Economic Networks, financial networks, Transportation Networks, Neural Networks, and the Power Grid. Specifically, he focuses on the development of foundational theory necessary to understand, monitor, and control systemic risk in interconnected systems.  He is four-time recipient of the George Axelby outstanding paper award for best paper in IEEE Transactions on Automatic Control. He is also the recipient of the Donald P. Eckman award from the American Control Council in 1993 for the best control engineer under 35. He is a fellow of IEEE and IFAC.

\end{document}